\documentclass[conference]{IEEEtran}
\IEEEoverridecommandlockouts
\usepackage{cite}
\usepackage{amsmath,amssymb,amsfonts}
\usepackage{graphicx}
\usepackage{textcomp}
\usepackage{balance}
\usepackage{xcolor}
\def\BibTeX{{\rm B\kern-.05em{\sc i\kern-.025em b}\kern-.08em
    T\kern-.1667em\lower.7ex\hbox{E}\kern-.125emX}}

\usepackage[caption=false,font=footnotesize]{subfig}
\usepackage{amsmath,amssymb,bm}
\usepackage{graphicx}
\usepackage{algpseudocode}
\newtheorem{proposition}{Proposition}
\newcommand{\E}{\mathbb{E}}
\newcommand{\Prb}{\Pr}
\newcommand{\Uset}{\mathcal{U}}
\newcommand{\Cset}{\mathcal{C}}
\newcounter{algcounter}
\newenvironment{ieeealgorithm}[1]{%
  \refstepcounter{algcounter}%
  \begin{figure}[t]
  \centering
  \begin{minipage}{0.98\columnwidth}
  \hrule\vspace{3pt}
  \textbf{Algorithm~\thealgcounter: #1}\par\vspace{2pt}
  \small
}{%
  \vspace{2pt}\hrule
  \end{minipage}
  \end{figure}
}

\newcommand{\safeincludegraphics}[2]{%
  \IfFileExists{#1}{%
    \includegraphics[width=\columnwidth]{#1}%
  }{%
    \fbox{\parbox[c][1.25in][c]{0.94\columnwidth}{\centering #2}}%
  }%
}

\begin{document}

%\title{Conference Paper Title*\\
%{\footnotesize \textsuperscript{*}Note: Sub-titles are not captured in Xplore and
%should not be used}

\title{\LARGE QoS-Aware Joint Concurrency and HARQ-Limit Design for HARQ-CC-Aided Slow Fluid Antenna Multiple Access
%\title{Optimal Concurrency and HARQ-Limit Design for HARQ-CC-Aided Slow Fluid Antenna Multiple Access}
%\title{Joint Concurrency and HARQ Adaptation for Slow Fluid Antenna Multiple Access}

%\thanks{Identify applicable funding agency here. If none, delete this.}
%}
\thanks{The work of K. K. Wong is supported by the Engineering and Physical Sciences Research Council (EPSRC) under Grant EP/W026813/1.}
% \thanks{S. Han, K. K. Wong, and H. Hong are with the Department of Electronic and Electrical Engineering, University College London, London, United Kingdom. K. K. Wong is also affiliated with the Department of Electronic Engineering, Kyung Hee University, Yongin-si, Gyeonggi-do 17104, Korea (e-mail: \{sixu.han.22, kai-kit.wong, hanjiang.hong\}@ucl.ac.uk).}
% \thanks{H. Liang is with the School of Information Science and Engineering, Southeast University, Nanjing, Jiangsu Province 210096, China.}
\thanks{K. K. Wong is also affiliated with the Department of Electronic Engineering, Kyung Hee University, Yongin-si, Gyeonggi-do 17104, Korea.}
\thanks{Corresponding author: Kai-Kit Wong.}
}

\author{\IEEEauthorblockN{Sixu Han}
\IEEEauthorblockA{
\textit{University College London}\\
London, United Kingdom \\
sixu.han.22@ucl.ac.uk}
\and
\IEEEauthorblockN{Kai-Kit Wong}
\IEEEauthorblockA{
\textit{University College London}\\
London, United Kingdom \\
kai-kit.wong@ucl.ac.uk}
\and
\IEEEauthorblockN{Hanjiang Hong}
\IEEEauthorblockA{
\textit{University College London}\\
London, United Kingdom \\
hanjiang.hong@ucl.ac.uk}
\and
\IEEEauthorblockN{Haoyu Liang}
\IEEEauthorblockA{
	\textit{Southeast University}\\
	Nanjing, China \\
	lianghaoyu@seu.edu.cn}
}

\maketitle
\begin{abstract}
	Hybrid automatic repeat request with Chase combining (HARQ-CC) improves the reliability of slow fluid antenna multiple access (\emph{s}FAMA), but its transmission limit must be jointly optimized with user concurrency. This paper proposes a quality-of-service (QoS)-aware joint design of concurrency level $U$ and HARQ limit $C$ for downlink HARQ-CC-aided \emph{s}FAMA. Based on the selected-port signal-to-interference ratio (SIR) distribution, we derive multi-round decoding failure probabilities and develop a two-stage optimization that maximizes feasible concurrency and then selects the minimum $C$ satisfying outage and mean-service constraints. We prove that the first feasible $C$ minimizes service duration for a given $U$, and that feasible $U$ values form an initial segment, enabling optimal bisection search. Numerical results agree with Monte Carlo simulations under full Jakes spatial covariance. The proposed adaptive FAS design supports six concurrent users, compared with three to five for fixed-HARQ FAS and two for HARQ-adaptive fixed-position-antenna (FPA), demonstrating that joint concurrency and HARQ adaptation expands the QoS-feasible region of FAS.
\end{abstract}

\begin{IEEEkeywords}
	Fluid antenna system (FAS), hybrid automatic repeat request (HARQ), slow fluid antenna multiple access (\emph{s}FAMA), concurrency control, reliability--latency tradeoff, quality-of-service (QoS).
\end{IEEEkeywords}

\section{Introduction}
\label{sec:introduction}

\IEEEPARstart{F}{luid} antenna systems (FASs) exploit closely spaced candidate ports within a compact aperture and activate the port with the most favorable propagation condition \cite{wong2020limits,wong2021fas,FBL_FAS,new2024tutorial,dual_fas,jsac}. By exploiting port-dependent spatial fluctuations, fluid antenna multiple access (FAMA) enables multiple downlink streams to share the same time--frequency resource while mitigating multiuser interference \cite{wong2022fama}. According to the port-update time scale, FAMA is categorized into fast FAMA (\emph{f}FAMA) \cite{wong2023ffama,waqar2026fast} and slow FAMA (\emph{s}FAMA) \cite{wong2023sfama,zhang2025fundamental,hong2025multi}. Specifically, \emph{f}FAMA performs symbol-level port switching to exploit instantaneous interference variations, whereas \emph{s}FAMA selects the port maximizing the average signal-to-interference ratio (SIR) and updates it over a slower time scale.

However, one-shot \emph{s}FAMA may suffer from insufficient reliability under high user concurrency or stringent spectral-efficiency requirements. Hybrid automatic repeat request with Chase combining (HARQ-CC) improves reliability by retransmitting failed packets and combining received replicas across rounds, thereby accumulating per-round SIRs \cite{ahmed2021harq,HARQCC}. The HARQ limit is therefore coupled with the concurrency level. A larger number of active users improves spatial reuse but increases interference, whereas a larger HARQ limit enhances reliability at the cost of longer service duration. Independent selection of these two parameters may lead to excessive retransmissions or infeasible concurrency under quality-of-service (QoS) constraints.

Joint design of HARQ and resource allocation is important for increasing the supported traffic load under reliability and latency constraints \cite{malak2019throughput,anand2018resource}.
Existing HARQ-\emph{s}FAMA studies include \cite{han2026harqsfama}, which develops a joint queue--HARQ model to capture inter-user coupling induced by queueing and HARQ retransmissions, and \cite{FCN_reference}, which investigates dense-port performance under exogenous Bernoulli interferer activity.
% Existing HARQ-\emph{s}FAMA studies mainly characterize performance under stochastic user activity \cite{han2026harqsfama,FCN_reference}. In \cite{han2026harqsfama}, a joint queue--HARQ model captures the coupling among users arising from the interaction of queueing, HARQ retransmissions, and interference, whereas \cite{FCN_reference} focuses on dense-port performance analysis under exogenous Bernoulli interferer activity. 
These models do not directly address the scheduler's problem of maximizing admitted users and determining the minimum HARQ limit required for QoS satisfaction. In this work, we consider a full-buffer scheduled system with $U$ concurrently active users. Each departing packet is replaced by a new packet, so that every user experiences $U-1$ active interferers in each transmission round.
The main contributions are summarized as follows.
\begin{itemize}
	\item We jointly optimize the number of concurrently admitted users and the HARQ limit under packet-outage and mean-service-duration constraints. The required QoS metrics are characterized using the per-round selected-port SIR distribution and Stieltjes convolution.   	
	\item We establish structural properties of the joint design problem. For a fixed concurrency level, the first HARQ limit satisfying the reliability constraint also minimizes the mean service duration among all reliability feasible limits. Moreover, feasible concurrency levels form an initial segment, which removes the HARQ limit from the outer search and enables a globally optimal bisection search over concurrency.
	\item We validate the proposed design through Monte Carlo (MC) simulations with full Jakes spatial covariance and compare it with fixed-HARQ FAS and HARQ-adaptive fixed-position-antenna (FPA) benchmarks. The results show that joint concurrency and HARQ adaptation effectively extends the QoS-feasible concurrency region by translating the reliability gain of FAS into higher user admission capability.
\end{itemize}

\section{System Model}
\label{sec:system}
\subsection{System Configuration}
Consider a downlink multiuser system in which a base station (BS)
equipped with $N_t$ conventional transmit antennas simultaneously
serves $U\in\mathcal{U}\triangleq\{2,\ldots,U_{\max}\}$ user
equipments (UEs) over the same time--frequency resource, where
$N_t=U$. 
% Consider a downlink multiuser system in which a base station (BS) equipped with $N_t\geq U_{\max}$ conventional transmit antennas simultaneously serves $U\in\mathcal{U}\triangleq\{2,\ldots,U_{\max}\}$ user equipments (UEs) over the same time--frequency resource. For a given concurrency level $U$, the BS activates $U$ transmit antennas, each carrying one data stream intended for a different UE.
The transmit antennas are sufficiently separated such that
the channel vectors associated with different transmit antennas are
mutually independent. Each transmit antenna carries one data stream
intended for a different UE. Each UE employs a one-dimensional FAS
(1D-FAS) with $K$ uniformly spaced ports over an aperture of length
$W\lambda_c$, where $\lambda_c$ denotes the carrier wavelength. Each
1D-FAS is connected to a single RF chain and accesses one port at a
time.

The received signal at UE~$u$ in HARQ round $i$ is
\begin{equation}
\label{eq:ruqi_rx}
\mathbf{r}_{u}^{(i)}
=
h_{u,u,k_u^{(i)}}^{(i)}\mathbf{s}_{u}^{(i)}
+
\sum_{\substack{v=1\\v\neq u}}^{U}
h_{v,u,k_u^{(i)}}^{(i)}\mathbf{s}_{v}^{(i)}
+
\boldsymbol{\eta}_{u,k_u^{(i)}}^{(i)}.
\end{equation}
Here, $k_u^{(i)}\in\{1,\ldots,K\}$ denotes the index of the active port
of UE~$u$ in round $i$, and $\mathbf{s}_{v}^{(i)}\in\mathbb{C}^{L}$ denotes the length-$L$
symbol sequence transmitted to UE~$v$. Under
equal-power transmission, all streams have the same average symbol
energy
$E_s\triangleq L^{-1}\mathbb{E}\!\left[
\|\mathbf{s}_{v}^{(i)}\|^{2}\right]$.
Moreover, $h_{v,u,k}^{(i)}\in\mathbb{C}$ denotes the channel
coefficient from the transmit antenna carrying the stream intended for
UE~$v$ to port $k$ of UE~$u$, and
$\mathbf{h}_{v,u}^{(i)}
\triangleq
[h_{v,u,1}^{(i)},\ldots,h_{v,u,K}^{(i)}]^{\mathsf T}
\in\mathbb{C}^{K}$
denotes the corresponding channel vector across the $K$ ports.
Finally, $\boldsymbol{\eta}_{u,k}^{(i)}\in\mathbb{C}^{L}$ denotes the
AWGN vector at port $k$ of UE~$u$. 

Under the Jakes spatial correlation model~\cite{jakesFAS}, $
\mathbf{h}_{v,u}^{(i)}
\sim
\mathcal{CN}\!\left(
\mathbf{0},
\sigma^{2}\boldsymbol{\Sigma}_{\mathrm J}
\right),$
where $\sigma^{2}$ denotes the average channel power and
$\boldsymbol{\Sigma}_{\mathrm{J}}\in\mathbb{R}^{K\times K}$ is the
normalized spatial correlation matrix with entries
\begin{equation}
\label{eq:jakes}
\left[\boldsymbol{\Sigma}_{\mathrm{J}}\right]_{k,\ell}
=
J_{0}\!\left(
\frac{2\pi W(k-\ell)}{K-1}
\right),~
1\leq k,\ell\leq K.
\end{equation}
Here, $J_{0}(\cdot)$ denotes the zeroth-order Bessel function of the
first kind. Each HARQ round experiences an independent quasi-static fading block, while the spatial correlation across the $K$ FAS ports within each block is characterized by the Jakes spatial correlation model in \eqref{eq:jakes}.

For tractable analysis, $\boldsymbol{\Sigma}_{\mathrm J}$ is
approximated using the spatial block-correlation model
in~\cite{spatialblockmodel,finiteblocklengthFAS}. The $K$ ports are
partitioned into $B$ disjoint blocks
$\{\mathcal{K}_{b}\}_{b=1}^{B}$, where
$|\mathcal{K}_{b}|=L_b$ and $\sum_{b=1}^{B}L_b=K$. Let $b(k)$ denote
the block containing port $k$. Accordingly, $
\mathbf{h}_{v,u}^{(i)}
\sim
\mathcal{CN}\!\left(
\mathbf{0},
\sigma^{2}\widehat{\boldsymbol{\Sigma}}
\right),$
where
\begin{equation}
\label{eq:Sigma_hat_blockdiag}
\widehat{\boldsymbol{\Sigma}}
=
\operatorname{diag}
\left(
\mathbf{A}_{1},\ldots,\mathbf{A}_{B}
\right),
\end{equation}
with
\begin{equation}
\label{eq:Ab_common_form}
\mathbf{A}_{b}
=
(1-\mu^{2})\mathbf{I}_{L_b}
+
\mu^{2}\mathbf{1}_{L_b}\mathbf{1}_{L_b}^{\mathsf T},
~
0<\mu<1.
\end{equation}
Thus, any two distinct ports within the same block have correlation
coefficient $\mu^2$, whereas ports in different blocks are independent. The block parameters $B$, $\{L_b\}_{b=1}^{B}$, and $\mu$
are obtained from the Jakes correlation model following
\cite[Algorithm~1]{spatialblockmodel}.
Equivalently, the covariance structure in
\eqref{eq:Sigma_hat_blockdiag}--\eqref{eq:Ab_common_form} admits the
common-component representation
\begin{equation}
\label{eq:common_gauss_multiblock}
\begin{aligned}
h_{v,u,k}^{(i)}
={}&
\sigma\left(
\sqrt{1-\mu^{2}}\,x_{v,u,k}^{(i)}
+
\mu\,\bar{x}_{v,u,b(k)}^{(i)}
\right)
\\
&+
\mathrm{j}\sigma\left(
\sqrt{1-\mu^{2}}\,y_{v,u,k}^{(i)}
+
\mu\,\bar{y}_{v,u,b(k)}^{(i)}
\right),
\end{aligned}
\end{equation}
where
$x_{v,u,k}^{(i)}$, $y_{v,u,k}^{(i)}$, $\bar{x}_{v,u,b}^{(i)}$, and $\bar{y}_{v,u,b}^{(i)}$ are mutually independent $\mathcal{N}(0,1/2)$ random variables. The block-common variables $\bar{x}_{v,u,b}^{(i)}$ and $\bar{y}_{v,u,b}^{(i)}$ are shared by all ports in $\mathcal{K}_{b}$.

For a tagged UE~$u$, the noise term in~\eqref{eq:ruqi_rx} is neglected
under the interference-limited assumption. The instantaneous SIR at
port $k$ in HARQ round $i$ is therefore
\begin{equation}
\label{eq:port_sir}
\mathrm{SIR}_{U,k}^{(i)}
=
\frac{
\left|h_{u,u,k}^{(i)}\right|^{2}
}{
\displaystyle
\sum_{\substack{v=1\\v\neq u}}^{U}
\left|h_{v,u,k}^{(i)}\right|^{2}
},~
1\leq k\leq K.
\end{equation}
Under \emph{s}FAMA, UE~$u$ selects an SIR-maximizing port in each HARQ round
according to
\begin{equation}
\label{eq:port_selection}
k_u^{(i)}
\in
\operatorname*{arg\,max}_{1\leq k\leq K}
\mathrm{SIR}_{U,k}^{(i)},
\end{equation}
yielding the selected-port SIR
\begin{equation}
\label{eq:selected_sir}
\mathrm{SIR}_U^{(i)}
\triangleq
\mathrm{SIR}_{U,k_u^{(i)}}^{(i)}
=
\max_{1\leq k\leq K}
\mathrm{SIR}_{U,k}^{(i)}.
\end{equation}

For a fixed $U$, the per-round SIRs
$\{\mathrm{SIR}_{U}^{(i)}\}$ are independent and identically
distributed across HARQ rounds. Accordingly, for
$i=1,\ldots,C$, their common CDF under the adopted spatial
block-correlation model is
\begin{equation}
\label{eq:per_round_cdf}
F_U(x)
\triangleq
\Pr\!\left\{
\mathrm{SIR}_{U}^{(i)}\leq x
\right\}
=
\left.
F_{\mathrm{SIR}}(x\mid m)
\right|_{m=U-1},
\end{equation}
where $F_{\mathrm{SIR}}(x\mid m)$ is evaluated using
\cite[Eq.~(37)]{han2026harqsfama}, with $m=U-1$ corresponding to the
$U-1$ interfering streams. 
The dependence of $F_U$ on $K$ and the other fixed parameters is suppressed for notational simplicity.

\subsection{HARQ-CC Reliability and Service Metrics}
With independently randomized interfering symbol blocks, the
aggregate interference terms in distinct HARQ rounds are mutually
uncorrelated. We further assume that, for any fixed concurrency level
$U$, the per-round post-selection SIRs
$\{\mathrm{SIR}_{U}^{(j)}\}_{j=1}^{C}$ are independent and
identically distributed nonnegative random variables with common CDF
$F_U$. Chase combining of the selected-port
observations yields the accumulated SIR after $i$ transmissions as
\begin{equation}
\label{eq:combined_sir}
\Gamma_U(i)
=
\sum_{j=1}^{i}
\mathrm{SIR}_{U}^{(j)}.
\end{equation}

We assume ideal outage-based decoding with perfect error detection and
instantaneous, error-free ACK/NACK feedback. For a fixed initial
spectral efficiency $R_0$, decoding after round $i$ is successful if
$\log_2[1+\Gamma_U(i)]\geq R_0$, or equivalently,
$\Gamma_U(i)\geq\theta$, where
$\theta\triangleq 2^{R_0}-1$.
Accordingly, for $i=1,\ldots,C$, the decoding-failure probability after
$i$ transmissions is
\begin{equation}
\label{eq:qi_full}
q_i(U;K,R_0)
\triangleq
\Prb\!\left\{\Gamma_U(i)<\theta\right\},
~
q_0\triangleq 1.
\end{equation}
Since $K$ and $R_0$ are fixed within each optimization instance, we
henceforth write $q_i(U)$ for notational simplicity.

Let
\begin{equation}
H_{U,i}(x)
\triangleq
\Prb\!\left\{\Gamma_U(i)\leq x\right\},
~ x\geq0.
\end{equation}
Since $\Gamma_U(1)=\mathrm{SIR}_U^{(1)}$, we have
$H_{U,1}(x)=F_U(x)$. Independence across HARQ rounds gives the
Stieltjes-convolution recursion
\begin{equation}
\label{eq:convolution}
H_{U,i}(x)
=
\int_{[0,x]}
H_{U,i-1}(x-z)\,\mathrm{d}F_U(z),
~ i=2,\ldots,C.
\end{equation}
Because the induced per-round SIR distribution $F_U$ is continuous,
$H_{U,i}$ is also continuous, and thus
\begin{equation}
q_i(U)=H_{U,i}(\theta),
~ i=1,\ldots,C.
\end{equation}

Let $C\in\Cset\triangleq\{1,\ldots,C_{\max}\}$ be the maximum
number of transmissions allowed for one packet, including its initial
transmission.  In contrast, let
\begin{equation}
\begin{aligned}
\tau_U
&\triangleq
\inf\!\left\{i\geq 1:\Gamma_U(i)\geq\theta\right\},\\
N_{\rm tx}(U,C)
&\triangleq
\min\{\tau_U,C\}
\end{aligned}
\label{eq:actual_rounds}
\end{equation}
denote the first successful round and the actual random number of
occupied transmissions, respectively.  

The packet outage probability, average number of occupied transmissions, and average HARQ service duration are
\begin{align}
 P_{\rm out}(U,C)&=q_C(U),
 \label{eq:pout}\\
 \bar N_{\rm tx}(U,C)
 &=\E[N_{\rm tx}(U,C)]
 =1+\sum_{i=1}^{C-1}q_i(U),
 \label{eq:mean_rounds}\\
 D_{\rm svc}(U,C)&=T_F\bar N_{\rm tx}(U,C),
 \label{eq:service_delay}
\end{align}
where $T_F$ is the duration of one HARQ data round. The metric $D_{\rm svc}(U,C)$ accounts only for the HARQ service time and excludes queueing delay.

To characterize resource efficiency, we define the per-user and aggregate resource-normalized goodputs as
\begin{equation}
\begin{aligned}
G_u(U,C)
&=
\frac{R_0\left[1-P_{\rm out}(U,C)\right]}
{\bar N_{\rm tx}(U,C)},\\
G_{\Sigma}(U,C)
&=
U G_u(U,C).
\end{aligned}
\label{eq:goodput}
\end{equation}
\begin{ieeealgorithm}{Bisection-Based Two-Stage Design of $U$ and $C$}
	\label{alg:search}
	\begin{algorithmic}[1]
		\Require $U_{\max},C_{\max},K,R_0,\epsilon,D_{\max},T_F$
		\Ensure $(U^{\star},C^{\star})$, or infeasible
		\Function{Test}{$U$}
		\State Evaluate $F_U$ and
		$q_i(U)$, $i=1,\ldots,C_{\max}$
		\State $c\gets\min\{i:q_i(U)\leq\epsilon\}$
		\If{no such $c$ exists}
		\State \Return $(\mathrm{false},+\infty)$
		\EndIf
		\State $d\gets T_F[1+\sum_{i=1}^{c-1}q_i(U)]$
		\State \Return $(d\leq D_{\max},c)$
		\EndFunction
		\State $\ell\gets2$, $r\gets U_{\max}$, $b\gets0$
		\While{$\ell\leq r$}
		\State $m\gets\lfloor(\ell+r)/2\rfloor$
		\State $(\mathrm{ok},c_m)\gets\Call{Test}{m}$
		\If{$\mathrm{ok}$}
		\State $(b,c_b)\gets(m,c_m)$; $\ell\gets m+1$
		\Else
		\State $r\gets m-1$
		\EndIf
		\EndWhile
		\State \Return infeasible if $b=0$; otherwise return $(b,c_b)$
	\end{algorithmic}
\end{ieeealgorithm}

\section{Optimization Variables\\ and Problem Formulation}
\label{sec:optimization}
The variable $U$ controls scheduled spatial reuse and, simultaneously,
the deterministic interferer count $U-1$.  The variable $C$ controls the maximum HARQ protection available to each packet.  By contrast, $K$ is a receiver-hardware parameter and $R_0$ is fixed by the selected payload and modulation-and-coding mode; neither is adapted by the optimization. They may be swept externally to assess sensitivity, but every reported optimizer must correspond to one fixed $(K,R_0)$ pair.

Let $\epsilon$ and $D_{\max}$ denote the target packet outage
probability and maximum average service duration, respectively. For
convenience, define the normalized mean-service budget as
\begin{equation}
d_{\max}\triangleq\frac{D_{\max}}{T_F}.
\label{eq:dmax}
\end{equation}
We seek the largest number of users that can be served concurrently.
Among policies supporting that concurrency, we choose the smallest HARQ
limit. The resulting two-stage optimization problem is formulated as
\begin{subequations}
\label{prob:joint_design}
\begin{align}
U^{\star}
&=
\max_{U\in\Uset,\,C\in\Cset} U
\label{prob:primary}\\
\mathrm{s.t.}\quad
&q_C(U)\leq\epsilon,
\label{prob:reliability}\\
&\bar N_{\rm tx}(U,C)\leq d_{\max},
\label{prob:latency}
\end{align}
\end{subequations}
with the tie-breaking rule
\begin{equation}
 C^{\star}=\min\{C\in\Cset:(U^{\star},C)
 \text{ satisfies }\eqref{prob:reliability}\text{--}\eqref{prob:latency}\}.
 \label{eq:tie_break}
\end{equation}
The resulting design therefore characterizes the maximum admissible
concurrency subject to per-user reliability and latency constraints.
The aggregate resource-normalized goodput $G_{\Sigma}$ in
\eqref{eq:goodput} is used only as a supplementary performance metric
and does not replace these QoS requirements.

%\section{Structural Reduction and Bisection Search}

\section{Structural Reduction of the Two-Stage Design}
\label{sec:algorithm}
For a candidate concurrency $U$, define the first HARQ limit that meets
the reliability target as
\begin{equation}
 C_{\epsilon}(U)
 =\min\{c\in\Cset:q_c(U)\leq\epsilon\},
 \label{eq:cepsilon}
\end{equation}
where $C_{\epsilon}(U)=+\infty$ if the set is empty.

\begin{proposition}\label{prop:structure}
For fixed $(K,R_0)$, a concurrency level $U\in\Uset$ is feasible under
the reliability and service-latency constraints if and only if
$C_{\epsilon}(U)<+\infty$ and
\begin{equation}
T_F\!\left[
1+\sum_{i=1}^{C_{\epsilon}(U)-1}q_i(U)
\right]
\leq D_{\max}.
\label{eq:oracle}
\end{equation}
Moreover, if $U_2\in\Uset$ is feasible, then every
$U_1\in\Uset$ satisfying $U_1<U_2$ is also feasible. 
Hence, the feasible concurrency levels are contiguous from the smallest admissible value up to the maximum feasible concurrency.
\end{proposition}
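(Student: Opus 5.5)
The proof has two parts: the characterization of feasibility for a fixed $U$, and the monotonicity in $U$.

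\emph{Characterization for fixed $U$.} Since $\Gamma_U(i+1)=\Gamma_U(i)+\mathrm{SIR}_U^{(i+1)}\geq\Gamma_U(i)$ pathwise, we have $\{\Gamma_U(i+1)<\theta\}\subseteq\{\Gamma_U(i)<\theta\}$. Hence $q_i(U)$ is nonincreasing in $i$.

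It follows that the reliability-feasible set $\{c\in\Cset:q_c(U)\leq\epsilon\}$ is an up-set of $\Cset$, namely $\{C_\epsilon(U),\ldots,C_{\max}\}$. It is empty exactly when $C_\epsilon(U)=+\infty$.

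On the other hand, $\bar N_{\rm tx}(U,C)=1+\sum_{i=1}^{C-1}q_i(U)$ is nondecreasing in $C$, because the $q_i\geq0$. So among the reliability-feasible limits, $C_\epsilon(U)$ minimizes the mean service duration.

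Thus $U$ is feasible, i.e., some $C$ satisfies \eqref{prob:reliability}--\eqref{prob:latency}, if and only if $C_\epsilon(U)<+\infty$ and $(U,C_\epsilon(U))$ satisfies \eqref{prob:latency}. This is exactly \eqref{eq:oracle} after multiplying by $T_F$.

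\emph{Monotonicity in $U$.} The key step is a stochastic ordering: I will show $q_i(U_1)\leq q_i(U_2)$ for all $i$ whenever $U_1<U_2$. I would prove this by coupling rather than through the closed form of $F_{\mathrm{SIR}}(x\mid m)$.

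Realize all channels $h_{v,u,k}^{(i)}$ for $v=1,\ldots,U_{\max}$ on one probability space, using the representation \eqref{eq:common_gauss_multiblock}. For every port $k$ and round $i$, the interference sum in \eqref{eq:port_sir} for $U_2$ contains the $U_1$ sum plus nonnegative terms. Hence $\mathrm{SIR}_{U_2,k}^{(i)}\leq\mathrm{SIR}_{U_1,k}^{(i)}$ pathwise.

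Taking the maximum over $k$ preserves this inequality, and so does summing over rounds. This gives $\Gamma_{U_2}(i)\leq\Gamma_{U_1}(i)$ pathwise, and therefore $q_i(U_1)\leq q_i(U_2)$.

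The coupled per-round SIRs have the same marginal laws $F_{U_1}$ and $F_{U_2}$, independent across rounds, as assumed in the model. So the coupled $q_i$ coincide with the ones defined through \eqref{eq:convolution}. This consistency with the i.i.d.\ modeling assumption is the one point that needs care; it is the step I expect to be the main obstacle. If needed, I would argue it round by round, using independent fading blocks.

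\emph{Conclusion.} Given the ordering, feasibility of $U_2$ yields $q_{C_\epsilon(U_2)}(U_1)\leq q_{C_\epsilon(U_2)}(U_2)\leq\epsilon$. Hence $C_\epsilon(U_1)\leq C_\epsilon(U_2)<\infty$.

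Then
\[
1+\sum_{i=1}^{C_\epsilon(U_1)-1}q_i(U_1)\leq 1+\sum_{i=1}^{C_\epsilon(U_2)-1}q_i(U_1)\leq 1+\sum_{i=1}^{C_\epsilon(U_2)-1}q_i(U_2)\leq d_{\max}.
\]
The first inequality uses fewer nonnegative terms, and the second uses termwise domination. So $U_1$ is feasible.

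Contiguity of the feasible set as an initial segment of $\Uset$ follows immediately.
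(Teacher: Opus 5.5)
Your proof is correct and follows the paper's argument essentially step for step. For fixed $U$, you use monotonicity of $q_i(U)$ in $i$ and of $\bar N_{\rm tx}(U,C)$ in $C$. For the ordering in $U$, you use the same coupling: common desired and first $U_1-1$ interfering channels, with the port-wise SIR ordering preserved under the maximum and the Chase-combining sum, giving $q_i(U_1)\leq q_i(U_2)$. The only differences are cosmetic: in the final chain you extend the summation range before applying termwise domination (the paper does the reverse), and you explicitly check that the coupled system reproduces the marginal laws $F_{U_1}$ and $F_{U_2}$, which the paper leaves implicit.
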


\begin{IEEEproof}
Since every selected SIR is nonnegative,
$\Gamma_U(i+1)\geq\Gamma_U(i)$, and hence
$q_{i+1}(U)\leq q_i(U)$. Therefore, any HARQ limit satisfying the
reliability constraint must satisfy $C\geq C_{\epsilon}(U)$. Moreover,
\begin{equation}
D_{\rm svc}(U,C+1)-D_{\rm svc}(U,C)
=T_F q_C(U)\geq0,
\label{eq:delay_increment}
\end{equation}
so $D_{\rm svc}(U,C)$ is nondecreasing in $C$. Hence, if
$C_{\epsilon}(U)<+\infty$, the choice $C=C_{\epsilon}(U)$ yields the
minimum mean service duration among all HARQ limits satisfying the
reliability constraint. It follows that $U$ is feasible if and only if
$C_{\epsilon}(U)<+\infty$ and
$D_{\rm svc}(U,C_{\epsilon}(U))\leq D_{\max}$, proving the first
statement.

For $U_2>U_1$, couple the two systems using the same desired channels
and the same first $U_1-1$ interfering channels.  Adding the remaining
$U_2-U_1$ nonnegative interference powers decreases every port-wise SIR
almost surely.  Maximization over the common $K$ ports and summation over
rounds preserve this ordering, giving
$q_i(U_2)\geq q_i(U_1)$ for all $i$.  It follows that
$C_{\epsilon}(U_1)\leq C_{\epsilon}(U_2)$ and
\begin{align*}
 D_{\rm svc}(U_1,C_{\epsilon}(U_1))
 &\leq T_F\!\left[1+\sum_{i=1}^{C_{\epsilon}(U_1)-1}q_i(U_2)\right]\\
 &\leq D_{\rm svc}(U_2,C_{\epsilon}(U_2)).
\end{align*}
Thus, feasibility of $U_2$ implies feasibility of $U_1$.
\end{IEEEproof}

Proposition~\ref{prop:structure} eliminates $C$ from the outer search and enables bisection over $U$. Therefore, once the largest feasible concurrency $U^\star$ is found,
the second-stage solution is simply
$C^\star=C_{\epsilon}(U^\star)$.
The adopted spatial block correlation model preserves the monotonic interference structure of the original
Jakes spatial correlation model: increasing $U$ adds nonnegative
interference power at every port while leaving the desired channel
unchanged. Hence, the ordering of ${q_i(U)}$ and the resulting
feasibility monotonicity remain valid under the block correlation
model. Consequently, Algorithm~\ref{alg:search} solves the two-stage design
by bisection over $U$ and returns the optimal pair
$(U^\star,C^\star)$. Bisection reduces the number of feasibility evaluations from $O(U_{\max})$ to $O(\log U_{\max})$, providing greater computational benefits for larger search spaces.

%\begin{ieeealgorithm}{Joint Search for $(U^{\star},C^{\star})$}
%\begin{ieeealgorithm}{Bisection Search for the Optimal Concurrency}
\begin{figure}[t!]
	\centering
	\includegraphics[width=0.79\columnwidth]{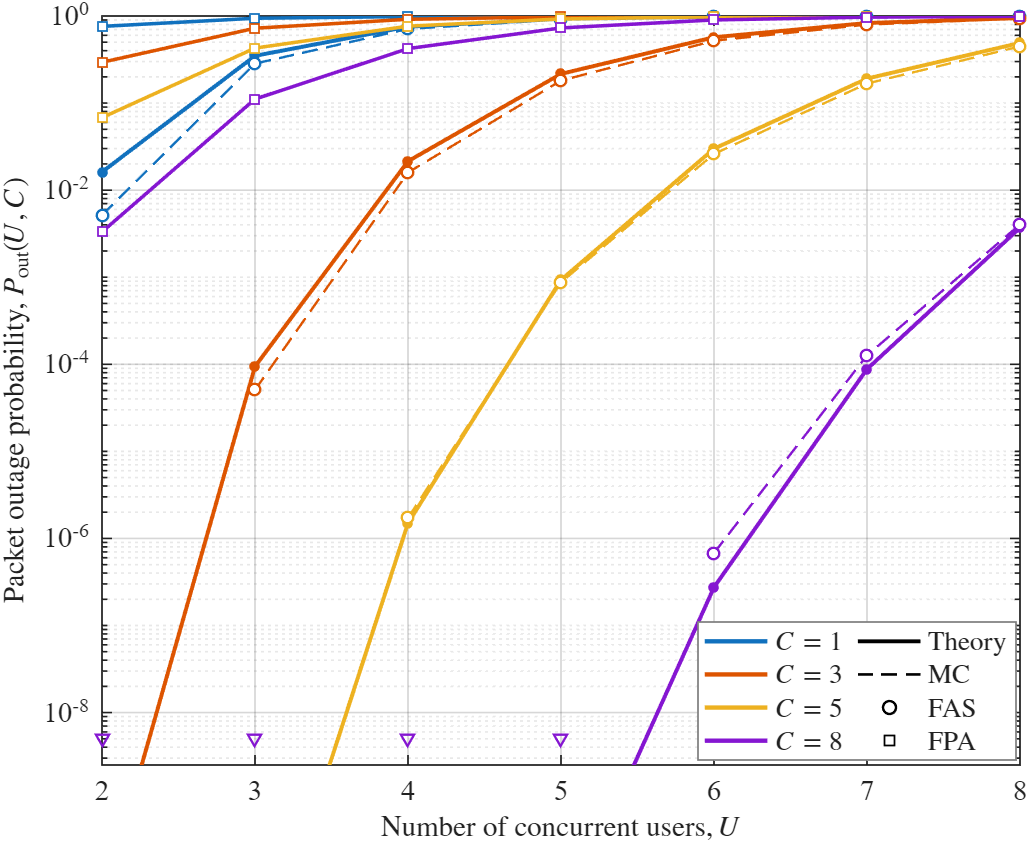}
	\caption{Packet-outage probability $P_{\mathrm{out}}(U,C)$ versus
		concurrency level $U$. Solid and dashed curves denote analytical and
		MC results, respectively, while circles and squares indicate FAS and
		FPA. Downward triangles represent zero observed outages placed at
		$1/N_{\rm MC}$ for visibility.}
	\label{fig:pout_vs_U}
	\vspace{-2mm}
\end{figure}

\section{Numerical Results}
\label{sec:numerical}
Unless otherwise stated, we set the carrier frequency to $7$ GHz and
the FAS aperture to $D=0.15$ m, yielding $W=D/\lambda_c=3.5024$ with
$K=48$ uniformly spaced ports. We consider $U\in\{2,\ldots,8\}$ active
users with $m=U-1$ interferers, $\theta=5$ dB, $T_F=1$ ms, and
$C_{\max}=12$. The QoS parameters are varied over
$\epsilon\in[10^{-4},10^{-1}]$ and
$d_{\max}=D_{\max}/T_F\in[1,5]$.
The analytical results use a $30\times30$ Gauss--Laguerre quadrature and
a $6000$-bin FFT convolution, while MC benchmarks use
$N_{\rm MC}=2\times10^{8}$ packet realizations with the full Jakes
spatial covariance. An FPA receiver is adopted as the conventional
benchmark.

Fig.~\ref{fig:pout_vs_U} validates the outage analysis. The analytical
results closely match the full-Jakes MC simulations over the resolvable
probability range. For fixed $C$, the outage probability increases with
$U$ due to stronger interference, whereas larger $C$ improves reliability
through HARQ-CC accumulation. FAS consistently outperforms FPA under
identical $(U,C)$ configurations. Fig.~\ref{fig:reliability_latency_tradeoff} illustrates the tradeoff between reliability and service duration. Increasing $C$ reduces outage probability but increases transmission occupancy, while larger $U$ causes higher interference and service demand. Under $\epsilon=10^{-3}$ and $d_{\max}=4$, the proposed two-stage design selects the maximum feasible concurrency and then the minimum HARQ limit, yielding $(U^\star,C^\star)=(6,7)$.

\begin{figure}[t!]
	\centering
	\includegraphics[width=0.79\columnwidth]{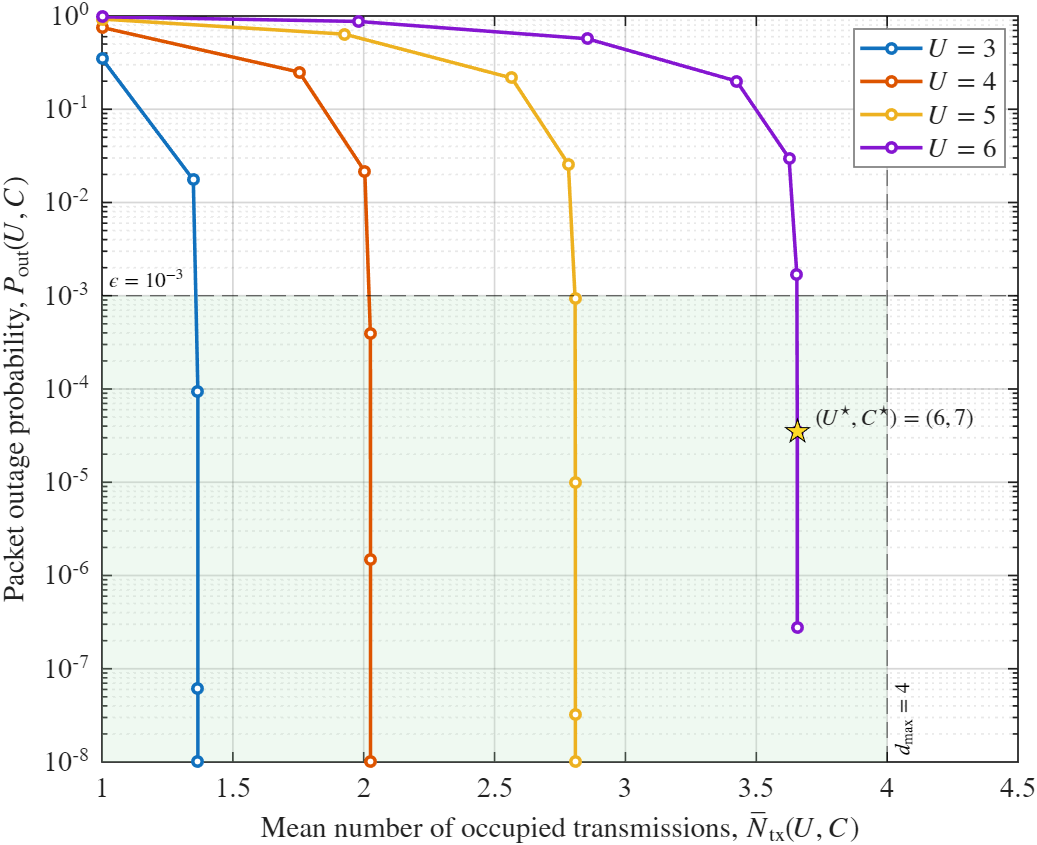}
	\caption{Reliability--service tradeoff as the HARQ limit $C$ varies for
		different concurrency levels $U$. The feasible region satisfies both
		outage and mean-service constraints at $\epsilon=10^{-3}$ and
		$d_{\max}=4$.}
	\label{fig:reliability_latency_tradeoff}
	\vspace{-2mm}
\end{figure}

\begin{figure*}[t!]
	\centering
	
	\subfloat[$U^\star$ versus $\epsilon$ at $d_{\max}=4$]
	{\includegraphics[width=.303\linewidth]{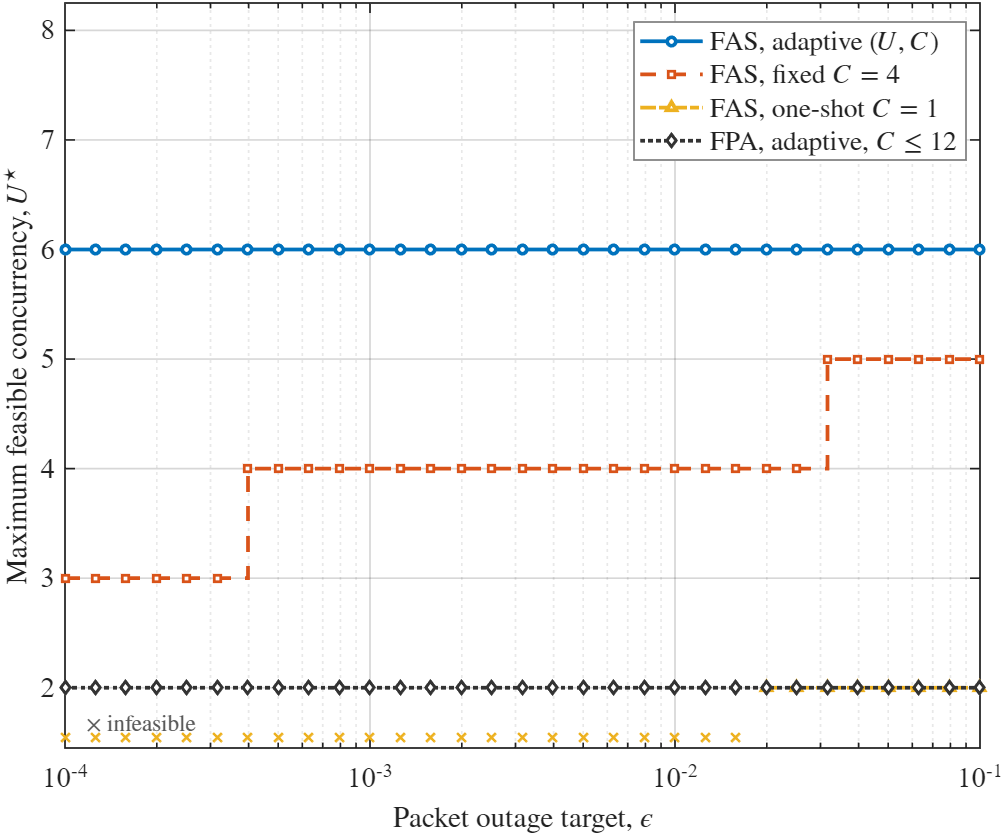}
		\label{fig:Ustar_vs_epsilon}}
	\hspace{1mm}
	\subfloat[Maximum feasible concurrency $U^\star$]
	{\includegraphics[width=.32\linewidth]{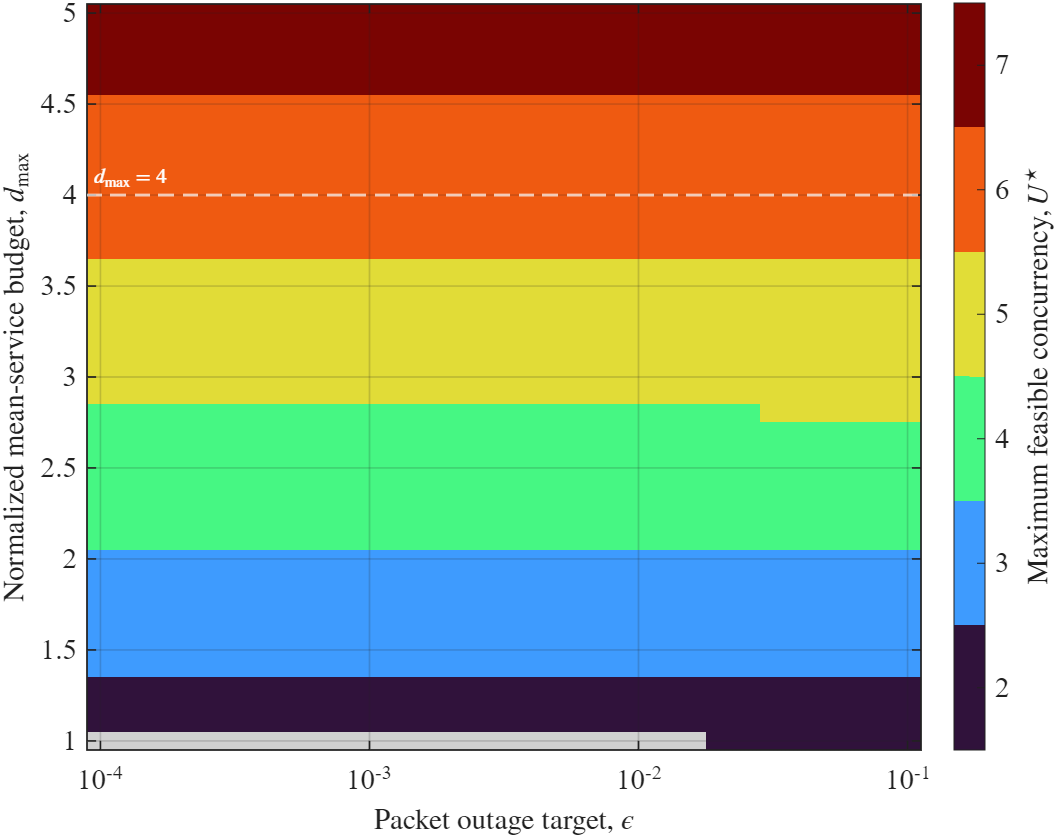}
		\label{fig:optimal_design_maps_U}}
	\hspace{1mm}
	\subfloat[Minimum HARQ limit attaining $U^\star$]
	{\includegraphics[width=.32\linewidth]{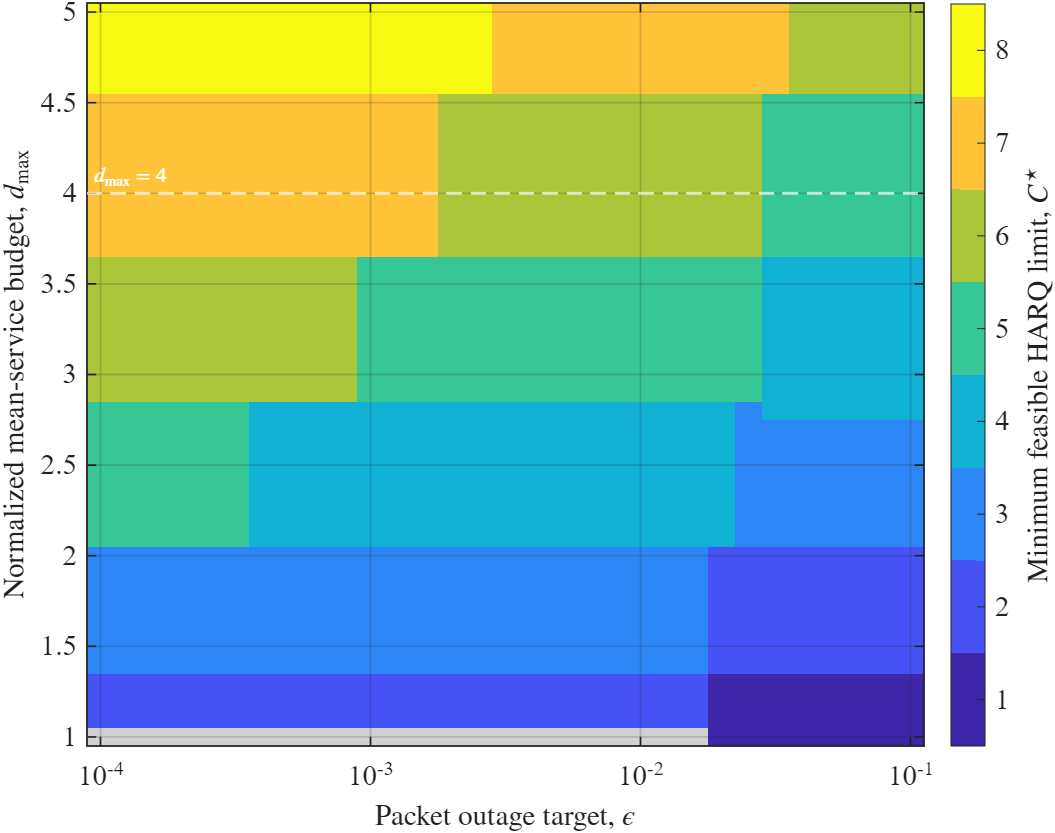}
		\label{fig:optimal_design_maps_C}}
	
	\caption{QoS-aware design results at $\theta=5$ dB and $C_{\max}=12$.
		Gray regions indicate infeasible QoS pairs.}
	\label{fig:optimal_design_results}
	\vspace{-2mm}
\end{figure*}

Fig.~\ref{fig:optimal_design_results} summarizes the system-level
optimization. Fig.~\ref{fig:Ustar_vs_epsilon} compares the proposed
adaptive design with fixed-HARQ FAS and adaptive FPA at $d_{\max}=4$.
The adaptive-FAS scheme maintains $U^\star=6$ over the considered outage
range, whereas fixed-HARQ FAS and FPA achieve smaller concurrency levels.
Figs.~\ref{fig:optimal_design_maps_U} and
\ref{fig:optimal_design_maps_C} extend the evaluation to the complete
$(\epsilon,d_{\max})$ design space. The feasible concurrency $U^\star$
increases monotonically with relaxed outage and service constraints and
exhibits a staircase structure due to integer-valued user admission. The
required HARQ limit $C^\star$ generally decreases with relaxed outage
requirements, but may increase with larger $d_{\max}$ because additional
service budget enables higher concurrency with stronger interference.
These results confirm the effectiveness of the proposed two-stage
optimization.

\section{Conclusion}
\label{sec:conclusion}
This paper jointly optimized concurrency and HARQ limits for downlink HARQ-CC-aided \emph{s}FAMA under outage and mean-service constraints. Selected-port SIR distributions yielded multi-round failure probabilities and QoS metrics via Stieltjes convolution. Monotonicity enabled a two-stage global design: maximize feasible concurrency via bisection, then select the minimum feasible HARQ limit. Full-Jakes simulations validated the analysis and showed that adaptation converts FAS reliability gains into higher QoS-constrained concurrency.

\balance


\begin{thebibliography}{99}

\bibitem{wong2020limits}
K.-K. Wong, A. Shojaeifard, K.-F. Tong, and Y. Zhang,
``Performance limits of fluid antenna systems,''
\emph{IEEE Commun. Lett.}, vol.~24, no.~11, pp.~2469--2472,
Nov. 2020.

\bibitem{wong2021fas}
K.-K. Wong, A. Shojaeifard, K.-F. Tong, and Y. Zhang,
``Fluid antenna systems,''
\emph{IEEE Trans. Wireless Commun.}, vol.~20, no.~3,
pp.~1950--1962, Mar. 2021.
\bibitem{FBL_FAS}
Z. Zhang, K.-K. Wong, D. Morales-Jimenez, H. Jiang, H. Xu, C. Masouros, and Z. Zhang, ``Finite-blocklength fluid antenna systems,'' \emph{IEEE Trans. Wireless Commun.}, early access, doi: 10.1109/TWC.2026.3723456, 2026.
\bibitem{new2024tutorial}
W. K. New \emph{et al.},
``A tutorial on fluid antenna system for 6G networks: Encompassing communication theory, optimization methods and hardware designs,''
\emph{IEEE Commun. Surveys Tuts.}, vol.~27, no.~4, pp.~2325--2377,
2024.
\bibitem{dual_fas}
Z. Zhang, Y. Wu, K.-K. Wong, H. Jiang, and A. Li, ``Jointly correlated dual-side fluid antenna system,'' \emph{IEEE Wireless Commun. Lett.}, vol. 15, pp. 4370--4374, 2026.
% \bibitem{faa}
% Z. Zhang, K.-K. Wong, H. Jiang, F. Rostami Ghadi, H. Shin, and Y. Zhang, ``Finite-aperture fluid antenna array design: Analysis and algorithm,'' \emph{IEEE Wireless Commun. Lett.}, vol. 15, pp. 3199--3203, 2026.
\bibitem{jsac}
Z. Zhang, K.-K. Wong, J. Dang, Z. Zhang, and C.-B. Chae, ``On fundamental limits for fluid antenna-assisted integrated sensing and communications for unsourced random access,'' {\em IEEE J. Sel. Areas Commun.}, vol. 44, pp. 136--149, 2026.
\bibitem{wong2022fama}
K.-K. Wong and K.-F. Tong,
``Fluid antenna multiple access,''
\emph{IEEE Trans. Wireless Commun.}, vol.~21, no.~7,
pp.~4801--4815, Jul. 2022.
\bibitem{wong2023ffama}
K.-K. Wong, K.-F. Tong, Y. Chen, and Y. Zhang,
``Fast fluid antenna multiple access enabling massive connectivity,''
\emph{IEEE Commun. Lett.}, vol.~27, no.~2, pp.~711--715,
Feb. 2023.

\bibitem{waqar2026fast}
N. Waqar, K.-K. Wong, C.-B. Chae, and R. Murch,
``Fast fluid antenna multiple access,''
\emph{arXiv preprint arXiv:2605.23642}, 2026.

\bibitem{wong2023sfama}
K.-K. Wong, D. Morales-Jimenez, K.-F. Tong, and C.-B. Chae,
``Slow fluid antenna multiple access,''
\emph{IEEE Trans. Commun.}, vol.~71, no.~5, pp.~2831--2846,
May 2023.

\bibitem{zhang2025fundamental}
Z. Zhang, K.-K. Wong, J. Dang, Z. Zhang, C. Masouros, and C.-B. Chae,
``On fundamental limits of slow-fluid antenna multiple access for unsourced random access,''
\emph{IEEE Wireless Commun. Lett.}, vol.~14, no.~11, pp.~3455--3459,
Nov. 2025.

\bibitem{hong2025multi}
H. Hong, K.-K. Wong, X. Zhu, H. Xu, H. Xiao, F. R. Ghadi, and H. Shin,
``Multi-port selection for FAMA: Massive connectivity with fewer RF chains than users,''
\emph{arXiv preprint arXiv:2511.17897}, 2025.



\bibitem{HARQCC}
H. A. Ngo and L. Hanzo, ``Hybrid automatic-repeat-request systems for cooperative wireless communications,'' {\em IEEE Commun. Surveys Tuts.}, vol. 16, no. 1, pp. 25--45, 2014.


\bibitem{ahmed2021harq}
A. Ahmed, A. Al-Dweik, Y. Iraqi, H. Mukhtar, M. Naeem, and E. Hossain,
``Hybrid automatic repeat request (HARQ) in wireless communications
systems and standards: A contemporary survey,''
\emph{IEEE Commun. Surveys Tuts.}, vol.~23, no.~4,
pp.~2711--2752, 4th Quart. 2021.

\bibitem{malak2019throughput}
D. Malak, H. Huang, and J. G. Andrews,
``Throughput maximization for delay-sensitive random access communication,''
\emph{IEEE Trans. Wireless Commun.}, vol.~18, no.~1,
pp.~709--723, Jan. 2019.

\bibitem{anand2018resource}
A. Anand and G. de Veciana,
``Resource allocation and HARQ optimization for URLLC traffic in 5G wireless networks,''
\emph{IEEE J. Sel. Areas Commun.}, vol.~36, no.~11,
pp.~2411--2421, Nov. 2018.

\bibitem{han2026harqsfama}
S. Han, K.-K. Wong, H. Hong, and H. Shin, ``HARQ for slow fluid antenna multiple access,'' \emph{arXiv preprint arXiv:2607.18426}, 2026.

\bibitem{FCN_reference}
S. Han, K.-K. Wong, and H. Hong, ``HARQ-CC-aided slow fluid antenna multiple access with highly correlated ports: An LST-based performance analysis,'' \emph{arXiv preprint arXiv:2608.24614}, 2026.

% \bibitem{AD_Geo}
% Z. Zhang \emph{et al.}, ``Geometry-structured channel reconstruction for conventional and fluid antenna systems: Bayesian inference and fundamental limits,'' \emph{arXiv preprint}, 2026. [Online]. Available: https://arxiv.org/abs/2606.04001.
% \bibitem{ad_jstsp}
% Z. Zhang \emph{et al.}, ``Joint activity detection and channel estimation for fluid antenna system exploiting geographical and angular information,'' \emph{IEEE J. Sel. Topics Signal Process.}, vol. 20, no. 3, pp. 354--370, 2026.
% \bibitem{csi_wu}
% Y. Wu, Z. Zhang, H. Jiang, K.-K. Wong, and C.-B. Chae, ``Learned-approximate message passing under Karhunen--Lo\`eve modeling for fluid antenna systems,'' \emph{IEEE Wireless Commun. Lett.}, vol. 15, pp. 2719--2723, 2026.

\bibitem{jakesFAS}
M. Khammassi, A. Kammoun, and M.-S. Alouini, ``A new analytical
approximation of the fluid antenna system channel,'' {\em IEEE Trans.
Wireless Commun.}, vol. 22, no. 12, pp. 8843--8858, Dec. 2023.
\bibitem{spatialblockmodel}
P. Ram\'{i}rez-Espinosa, D. Morales-Jimenez, and K.-K. Wong, ``A new spatial block-correlation model for fluid antenna systems,'' {\em IEEE Trans. Wireless Commun.}, vol. 23, no. 11, pp. 15 829--15 843, 2024.
\bibitem{finiteblocklengthFAS}
Z. Zhang, K.-K. Wong, D. Morales-Jimenez, H. Jiang, P. Ram\'{i}rez-Espinosa, C.-B. Chae, and C. Masouros, ``Finite-blocklength fluid antenna systems with spatial block-correlation channel model,'' {\em IEEE Wireless Commun. Lett.}, vol. 15, pp. 1911--1915, 2026.
\end{thebibliography}
\end{document}